\newtheorem{thm}{Theorem}
\newtheorem{defn}{Definition}
\newtheorem{lem}{Lemma}
\title{\LARGE \bf
Aggregate Flexibility of Thermostatically Controlled Loads\\ using Generalized Polymatroids
}
\author{Karan Mukhi and Alessandro Abate
\thanks{This work was supported by Réseau de Transport d'Électricité.}
\thanks{Karan Mukhi and Alessandro Abate are with the Department of Computer Science, University of Oxford, UK, OX1 3QG
        {\tt\small \{karan.mukhi,alessandro.abate\}@cs.ox.ac.uk}}%
}
\begin{document}

\maketitle
\thispagestyle{empty}
\pagestyle{empty}

\begin{abstract}
Leveraging populations of thermostatically controlled loads could provide vast storage capacity to the grid. To realize this potential, their flexibility must be accurately aggregated and represented to the system operator as a single, controllable virtual device. Mathematically this is computed by calculating the Minkowski sum of the individual flexibility of each of the devices. Previous work showed how to exactly characterize the flexibility of lossless storage devices as generalized polymatroids-a family of polytope that enable an efficient computation of the Minkowski sum. In this paper we build on these results to encompass devices with dissipative storage dynamics. In doing so we are able to provide tractable methods of accurately characterizing the flexibility in populations consisting of a variety of heterogeneous devices. Numerical results demonstrate that the proposed characterizations are tight.
\end{abstract}

\section{Introduction}

Achieving sustainable power system operation requires deep integration of renewable energy sources. However, many of these sources are inherently intermittent and uncertain in their power output. To economically mitigate these challenges, the grid must be operated with increased flexibility to accommodate the growing share of renewable generation.
In parallel to this, the electrification of heating—combined with the growing adoption of HVAC systems—is expected to significantly increase electricity demand from thermostatically controlled loads (TCLs). As the temperature of these devices is usually allowed to lie within a dead-band around its set-point, TCLs possess a degree of flexibility in their power consumption \cite{Hao2015AggregateLoads}. This flexibility can be harnessed to mitigate intermittency and enable deeper penetrations of renewable generation \cite{Callaway2009TappingEnergy}.

Several methods have been proposed to enable the use of this flexibility in power systems. These approaches broadly fall into three main paradigms: centralized, decentralized, and hierarchical. Centralized approaches involve individual devices communicating their operational constraints to a central coordinator, which then solves a single, system-wide optimization problem. While such methods can guarantee optimality, they generally suffer from poor scalability as the number of devices increases \cite{Tavakkoli2021Bonus-BasedSystem}. Decentralized approaches such as \cite{Tindemans2015DecentralizedResponse, EsmaeilZadehSoudjani2013AggregationAbstractions}, mitigate these scalability challenges by allowing devices to make decisions locally. However, the absence of a centralized decision-making entity makes it challenging to assign accountability for ensuring strict grid reliability guarantees.
Hierarchical approaches strike a balance between the two,  whereby an \textit{aggregator} manages a population of devices and communicates their aggregate flexibility to the system operator \cite{Xu2016HierarchicalLoads}. It is within this paradigm that this our work sits. 

To make use of the flexibility in the populations they control, it is useful for the aggregator to characterize the set of admissible consumption profiles that the population as a whole can take.
The set of consumption profiles that an individual TCL can take whilst satisfying its operational constraints can be represented as a
convex polytope \cite{Hao2015AggregateLoads}. The aggregate flexibility of a population of devices is geometrically represented by the Minkowski sum of their individual flexibility polytopes \cite{Barot2017APolytopes}. While both active and reactive power flexibility are considered in works such as \cite{Kundu2018ApproximatingApproach} and \cite{Zhou2024AggregatedGuarantee}, the majority of the literature focuses on the aggregation of active power flexibility. In general, calculating the Minkowski sum is computationally expensive and so many of the works in the literature have been dedicated to inner approximations of the sum. Many of these methods introduce a base polytope to approximate individual flexibility sets. These base sets are selected to ensure that computing the Minkowski sum over a collection of the approximations is computationally efficient. In \cite{Zhao2017ALoads} \textit{homothets} are used, and approximations are made by scaling and translating the homothet, with \cite{Taha2024AnPopulations} extending this to affine transformations of the homothet. Similarly, in \cite{Muller2019AggregationResources} \textit{zonotopes} are used as the base set. \textit{Permutahedra}  are used in \cite{Mukhi2023AnVehicles}  and \cite{Panda2024EfficientVehicles} when considering a populations of EVs, with an extension to stochastic settings in \cite{Mukhi2024DistributionallyFlexibility}. 

Recent work established exact characterizations of individual flexibility sets as \textit{generalized polymatroids} (g-polymatroids). These are a rich class of polytopes that possess several desirable structural properties. Of particular relevance in this context is the fact that their Minkowski sum can be computed efficiently. Leveraging this property, the authors of \cite{Mukhi2025ExactPolymatroids} showed how to obtain exact and tractable representations of the aggregate flexibility as the Minkowski sum of individual sets, enabling scalable, exact characterization across large populations of devices. However, the assumptions adopted in that work constrain the characterization to devices with lossless storage dynamics. While such assumptions are well-suited for modeling the flexibility of storage-based resources such as electric vehicles (EVs) and energy storage systems (ESSs), they are not appropriate for representing the flexibility inherent in TCLs, where energy dissipation and thermal losses play a significant role. In this paper, we extend these methods by relaxing the assumption of lossless storage dynamics, thereby generalizing the framework to accommodate flexibility characterizations for TCLs. This generalization enables a unified approach for accurately characterizing the flexibility of heterogeneous device populations within a common aggregation framework.

To achieve this, we propose a base polytope that conforms to the structure of a g-polymatroid, which we use to construct maximal inner approximations of the flexibility sets of a TCL. The base polytopes we employ are more expressive than those utilized in existing literature, resulting in tighter approximations of individual flexibility sets and, consequently, a more accurate characterization of aggregate flexibility. Relying on the results in \cite{Mukhi2025ExactPolymatroids} these polytopes can be aggregated with the flexibility sets of other TCLs \textit{and} other flexible devices. In summary, the main contributions of this work can be summarized as follows: first, we show how to derive maximal inner approximations of TCL flexibility sets using g-polymatroids. Secondly, we show how this characterization fits within existing g-polymatroid-based aggregation methods.

The rest of this paper is structured as follows: in Section~\ref{sec:prob_form}, we formulate the problem by modeling TCL dynamics, defining flexibility sets, and introducing g-polymatroids as a tool to tractably compute the aggregate flexibility sets. Section~\ref{sec:approximation} presents the construction of maximal inner g-polymatroid approximations for the individual flexibility sets and shows how these approximations can be aggregated efficiently. In Section~\ref{sec:num_res} we benchmark our methods against other approximations in the literature, and provide a case study demonstrating the utility of this work, finally we draw conclusions and provide directions for future work in Section~\ref{sec:conc}.
\vspace{5pt}
\subsubsection*{Notation}
For a vector $u \in \mathbb{R}^\mathcal{T}$ where $\mathcal{T}$ is a finite set and $t \in \mathcal{T}$, we let $u(t)$ denote the $t^{th}$ component of $u$.
Given a set $A \subseteq \mathcal{T}$, we let $x(A) = \sum_{t \in A} u(t)$. 
Lastly, $\sum (\cdot)$ denotes both sums and Minkowski sums, depending on the context.

\section{Problem Formulation}\label{sec:prob_form}
In this section, we present a model for TCL power consumption and define our notions of individual and aggregate flexibility sets. We provide a brief introduction to g-polymatroids, and formalize the problem of finding maximal inner the aggregate flexibility set.
In the following, we consider an aggregator that exerts direct control over the power consumption of a population of TCLs, indexed by $i \in \mathcal{N} := \{1, \dots, N\}$. The problem is formulated over a discrete time horizon comprising $T$ periods, indexed by $t \in \mathcal{T} := \{1, \dots, T\}$, where (for simplicity and without loss of generality) each period has a fixed duration of unit length.

\subsection{TCL Consumption Model}
The temperature dynamics of a TCL can be modeling using the following linear model \cite{Zhao2017ALoads}:
\begin{equation}
    \theta(t) = a\theta(t-1) + (1-a)(\theta_a(t) - b p(t)),
\end{equation}
where $\theta(t)$ is the temperature and $p(t)$ the power consumption of the TCL in timestep $t$. $\theta_a(t)$ is the ambient temperature which may be time-dependent, however for simplicity we assume it is constant in the following
and $a \in [0,1)$ and $b$ are model parameters that depend on the thermal properties of the device. We have assumed the device cools, however a similar model can be applied to heating devices. The temperature of the device must stay within a dead-band, $\Delta$, around the set-point temperature $\theta_r$ of the device:
\begin{equation}
    \theta_r - \Delta/2 \leq \theta(t) \leq \theta_r + \Delta/2 \quad \forall t.
\end{equation}
In this model we assume the power consumption $p(t)$ is a continuous variable bounded by the power rating of the device, $\overline{p}$, i.e. $p(t) \in [0,\overline{p}]$. Typically, TCLs tend to be have on/off controllers, and so $p(t)$ is better modeled as a binary variable: $p(t) \in \{0, \overline{p}\}$, however computing the flexibility of these non-linear models is tricky. When considering large populations of devices, modeling power consumption as continuous leads to a good approximation \cite{Hao2015AggregateLoads}.

After a change of variables, the temperature dynamics can be reformulated as \cite{Zhao2017ALoads}:
\begin{equation}
    x(t) = a x(t-1) + u(t).
\end{equation}
The system is subject to the follow constraints on its state:
\begin{equation}
    \underline{x} \leq x(t) \leq \overline{x} \quad \forall t,
\end{equation}
that ensure the temperature limits of the device are not violated. We can re-write this in terms of $u(t)$ 
\begin{equation*}
    \underline{x}(t) \leq \sum_{s=1}^t a^{t-s}u(s) \leq \overline{x}(t) \quad \forall t,
\end{equation*}
where $\underline{x}(t) = \underline{x} - a^tx(0)$ and $\overline{x}(t) = \overline{x} - a^tx(0)$.
The control input, $u(t)$, is constrained to lie within the interval
\begin{equation}
    \underline{u} \leq u(t) \leq \overline{u} \quad \forall t.
\end{equation}

\subsection{Flexibility Polytopes}
Now given a device with parameters $(a_i, \underline{x}_i, \overline{x}_i, \underline{u}_i, \overline{u}_i)$, we would like to characterize the set consumption profiles it may take without violating its temperature or power constraints.
\begin{defn}
    For a TCL with consumption parameters $(a_i, \underline{x}_i, \overline{x}_i, \underline{u}_i, \overline{u}_i)$, the individual flexibility set of the device, denoted $\mathcal{F}_i$ is the set of all feasible consumption profiles for the TCL:
    \begin{equation*}
        \mathcal{F}_{i} := \left\{ u \in \mathbb{R}^{\mathcal{T}} \middle\vert\;
        \begin{aligned}
        \underline{u}_i(t) &\leq u_i(t) \leq \overline{u}_i(t), \quad \forall t \in \mathcal{T} \\
        \underline{x}_i(t) & \leq \sum_{s=1}^t a_i^{t-s}u_i(s) \leq \overline{x}_i(t), \forall t \in \mathcal{T}
        \end{aligned}
        \right\}
    \end{equation*}
\end{defn}
As defined above, $\mathcal{F}_i$ is characterized by a set of linear constraints. Due to the limits on $u(t)$, it is clearly bounded and so the individual flexibility sets for TCLs are all compact convex polytopes. 

Now, we consider an aggregator that has direct control over the consumption of a collection of $N$ devices, each with their own consumption parameters. To make use of the flexibility in the population they control, the aggregator needs to characterize the set of \textit{aggregate} consumption profiles the populations as a whole may take \cite{Taha2024AnPopulations}.
\begin{defn}
    The aggregate flexibility set of a population of devices, is the set of all feasible aggregate consumption profiles of the population:
    \begin{equation*}
        \mathcal{F}_\mathcal{N}:= \left\{ u_\mathcal{N} \in \mathbb{R}^{\mathcal{T}} \middle| u_\mathcal{N} = \sum_{i \in \mathcal{N}} u_i,\;\; u_i \in \mathcal{F}_i \; \forall i \in \mathcal{N} \right\}.
    \end{equation*}
\end{defn}
$\mathcal{F}_\mathcal{N}$ is, by definition, the Minkowski sum of the individual flexibility sets: 
\begin{equation}\label{eq:m_sum}
    \mathcal{F}_\mathcal{N} = \sum_i^N \mathcal{F}_i.
\end{equation}
In general computing the Minkowski sum of a collection of polytopes is NP-hard \cite{Tiwary2008OnPolytopes}, and so an exact computation of \eqref{eq:m_sum} is intractable. In this work, similar to \cite{Zhao2017ALoads}, we instead aim at providing maximal inner approximations of $\mathcal{F}_\mathcal{N}$.

\subsection{Generalized Polymatroids}
To provide a method of approximating the aggregate flexibility set,  we will make use of g-polymatroids. These are an expressive family of polytopes that exhibit a number of useful properties. Notably for this work, computing the Minkowski sum of a collection of them is efficient. Below we provide a brief definition of their construction and introduce a relevant theorem that will aid us in the computing \eqref{eq:m_sum}, the reader is referred to \cite{Frank2011ConnectionsOptimization} for a complete treatment of these polytopes.

A \textit{submodular function}, $b: 2^\mathcal{T} \rightarrow \mathbb{R}$, is a set function defined over subsets of a finite set $\mathcal{T}$, that satisfies the submodular property:
\begin{equation*}
    b(A) + b(B) \geq b(A\cup B) + b(A \cap B).
\end{equation*}
Similarly, a \emph{supermodular function}, $p$, can be defined by reversing the inequality in the equation above, or equivalently, as the negative of a submodular function; that is, $p = -b$ is supermodular if $b$ is submodular. A pair of super- and submodular functions can be used to generate a g-polymatroid.

\begin{defn}\cite{Frank2011ConnectionsOptimization}
    Given a pair of super- and submodular functions $(p,b)$, a g-polymatroid, denoted $\mathcal{Q}(p,b)$, is defined as
    \begin{equation*}
        Q(p,b) := \left\{ u \in  \mathbb{R}^{\mathcal{T}} \mid p(A) \leq u(A) \leq b(A) \;\; \forall A \subseteq \mathcal{T} \right\}.
    \end{equation*}
\end{defn}
G-polymatroids are an expressive class of polytope, that can exactly represent a wide range of polytopes. In \cite{Mukhi2025ExactPolymatroids} it was shown how they can exactly characterize the flexibility sets for storage devices with lossless dynamics.
\begin{thm}\label{thm:g-poly_m_sum}The Minkowski sum of a collection of g-polymatroids is given by
    \begin{equation}
        \mathcal{Q}\left( \sum_i^N p_i, \sum_i^N b_i\right) = \sum_i^N  \mathcal{Q}(p_i, b_i)
    \end{equation}
\end{thm}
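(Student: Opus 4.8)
The plan is to reduce the claim to the two-summand case $N=2$ and then establish the set equality $\mathcal{Q}(p_1+p_2,\,b_1+b_2) = \mathcal{Q}(p_1,b_1) + \mathcal{Q}(p_2,b_2)$ by a double inclusion; the general statement then follows by induction on $N$, using associativity of both the Minkowski sum and of pointwise addition of set functions. Before starting I would record the elementary fact that the pointwise sum of submodular (resp.\ supermodular) functions is again submodular (resp.\ supermodular), since the defining inequality is linear in the function values. This guarantees that $\mathcal{Q}(p_1+p_2,\,b_1+b_2)$ is a bona fide g-polymatroid, so the equation is meaningful.

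The inclusion $\mathcal{Q}(p_1,b_1)+\mathcal{Q}(p_2,b_2)\subseteq\mathcal{Q}(p_1+p_2,\,b_1+b_2)$ is the routine direction. Given $u=u_1+u_2$ with $u_i\in\mathcal{Q}(p_i,b_i)$, I would add, for each $A\subseteq\mathcal{T}$, the two chains $p_i(A)\leq u_i(A)\leq b_i(A)$ to obtain $(p_1+p_2)(A)\leq u(A)\leq(b_1+b_2)(A)$, which are exactly the constraints defining membership in $\mathcal{Q}(p_1+p_2,\,b_1+b_2)$.

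The reverse inclusion is the crux. Given $u\in\mathcal{Q}(p_1+p_2,\,b_1+b_2)$, I must produce a decomposition $u=u_1+u_2$ with $u_i\in\mathcal{Q}(p_i,b_i)$. I would recast this as a feasibility problem: a valid $u_1$ is precisely a point lying in both $\mathcal{Q}(p_1,b_1)$ and the reflected-translated set $u-\mathcal{Q}(p_2,b_2)$, since $u_2=u-u_1\in\mathcal{Q}(p_2,b_2)$ holds iff $u(A)-b_2(A)\leq u_1(A)\leq u(A)-p_2(A)$ for all $A$. The set $u-\mathcal{Q}(p_2,b_2)$ is itself a g-polymatroid $\mathcal{Q}(p_2',b_2')$ with $p_2'(A)=u(A)-b_2(A)$ and $b_2'(A)=u(A)-p_2(A)$, these being supermodular and submodular respectively as a modular function minus a submodular (resp.\ supermodular) one. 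Thus a decomposition exists iff $\mathcal{Q}(p_1,b_1)\cap\mathcal{Q}(p_2',b_2')\neq\emptyset$.

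To close this I would invoke the g-polymatroid intersection theorem of \cite{Frank2011ConnectionsOptimization}, which states that two g-polymatroids meet iff $p_1(A)\leq b_2'(A)$ and $p_2'(A)\leq b_1(A)$ for every $A\subseteq\mathcal{T}$. Substituting the definitions, these two conditions read $(p_1+p_2)(A)\leq u(A)$ and $u(A)\leq(b_1+b_2)(A)$, which are exactly the hypotheses satisfied by $u$; hence the intersection is nonempty and the required $u_1$ exists. The main obstacle is precisely this final step: unlike the easy inclusion it cannot be dispatched by elementary manipulation, and rests on the deep structural fact that the two obviously-necessary pointwise conditions are also sufficient for a common point to exist. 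I would note in passing the alternative route via support functions, computing the support function of each g-polymatroid by the greedy algorithm, observing that for a fixed weight direction it is linear in the pair $(p,b)$, and concluding from the additivity of support functions under Minkowski sum; this avoids the explicit decomposition but leans on the same underlying greedy and polyhedral machinery.
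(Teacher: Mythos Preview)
The paper does not actually prove this theorem; it states it as a known structural fact about g-polymatroids and refers the reader to \cite{Frank2011ConnectionsOptimization} for the underlying theory. There is therefore no ``paper's own proof'' to compare against.

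Your argument is correct and is essentially the standard proof one finds in Frank's treatment: the easy inclusion is immediate, and the hard inclusion is reduced to nonemptiness of the intersection of $\mathcal{Q}(p_1,b_1)$ with the reflected-translated g-polymatroid $u-\mathcal{Q}(p_2,b_2)$, which the discrete separation/intersection theorem for g-polymatroids resolves. One small technical point worth tightening: the intersection theorem you invoke requires the defining pairs to be paramodular (i.e., to satisfy the cross-inequality $b(A)-p(B)\geq b(A\setminus B)-p(B\setminus A)$), a hypothesis the paper's informal definition suppresses but which is implicit in calling $\mathcal{Q}(p_i,b_i)$ a g-polymatroid. You should also remark that the reflected pair $(p_2',b_2')=(u-b_2,\,u-p_2)$ inherits paramodularity from $(p_2,b_2)$ because $A\mapsto u(A)$ is modular; this is a one-line check and ensures the intersection theorem actually applies. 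Your alternative route via support functions and the greedy algorithm is equally valid and is in fact how the result is sometimes derived in the literature.
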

This result provides us with a method of computing the exact Minkowski sum for a collection of g-polymatroids.
By approximating the flexibility sets of individual TCLs as g-polymatroids, we can use Theorem \ref{thm:g-poly_m_sum} to tractably compute approximations for the aggregate flexibility of a collection of TCLs. Specifically, we want to find maximal inner g-polymatroid approximations of the individual flexibility sets, i.e.
\begin{equation}
    \mathcal{Q}(p_i, b_i) \subset \mathcal{F}_i
\end{equation}
Here $p_i$ and $b_i$ are the super- and submodular functions generating the maximal inner g-polymatroid of $\mathcal{F}_i$. Using these approximations and defining 
\begin{equation*}
        p_\mathcal{N} := \sum_i^N p_i, \quad  \quad  b_\mathcal{N} := \sum_i^N b_i,
\end{equation*}
(and similarly for $b^n_\mathcal{N}$ and $b^n_\mathcal{N}$), we can employ Theorem \ref{thm:g-poly_m_sum} to provide inner and outer approximations of the aggregate flexibility sets:
\begin{equation}
    \mathcal{Q}\left( p_\mathcal{N}, b_\mathcal{N}\right) \subseteq \mathcal{F} 
\end{equation}
Hence, the rest of this paper is focused on finding the super- and submodular functions that generate the maximal inner, and minimal outer g-polymatroid approximations of $\mathcal{F}_i$.
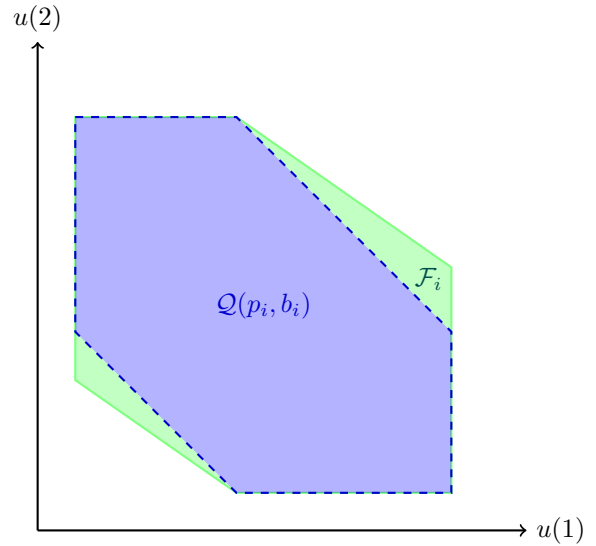
\begin{figure}
    \centering
        \begin{tikzpicture}[xscale=5, yscale=5]
    

        \pgfmathsetmacro{\xmax}{1.3}
        \pgfmathsetmacro{\xmin}{0.3}
        \pgfmathsetmacro{\a}{0.7}
    
        \pgfmathsetmacro{\umax}{1}
        \pgfmathsetmacro{\umin}{0}
        
        \pgfmathsetmacro{\uone}{(\xmax - \umax)/\a}
        \pgfmathsetmacro{\utwo}{(\xmax - \a*\umax)}
    
        \pgfmathsetmacro{\uonel}{(\xmin - \umin)/\a}
        \pgfmathsetmacro{\utwol}{(\xmin - \a*\umin)}

        \coordinate (v1) at (\umin,\umax);
        \coordinate (v2) at (\uone,\umax);
        \coordinate (v3) at (\umax,\utwo);
        \coordinate (v4) at (\umax,\umin);
        \coordinate (v5) at (\uonel, \umin);
        \coordinate (v6) at (\umin, \utwol);
    
        \coordinate (a1) at (\umin,\umax);
        \coordinate (a2) at (\uone,\umax);
        \coordinate (a3) at (\umax,\uone);
        \coordinate (a4) at (\umax,\umin);
        \coordinate (a5) at (\uonel, \umin);
        \coordinate (a6) at (\umin, \uonel);
    
        \coordinate (b1) at (\umin,\umax);
        \coordinate (b2) at (\utwo,\umax);
        \coordinate (b3) at (\umax,\utwo);
        \coordinate (b4) at (\umax,\umin);
        \coordinate (b5) at (\utwol, \umin);
        \coordinate (b6) at (\umin, \utwol);


    \fill[green!80, opacity=0.3] (v2) -- (v3) -- (a3) -- cycle;
    \fill[green!80, opacity=0.3] (v5) -- (v6) -- (a6) -- cycle;
    
    \draw[green!50, thick] (v1) -- (v2) -- (v3) -- (v4) -- (v5) -- (v6) -- cycle;
    
    \fill[blue, opacity=0.3] (a1) -- (a2) -- (a3) -- (a4) -- (a5) -- (a6) -- cycle;
    \draw[blue!80!black, dashed, thick] (a1) -- (a2) -- (a3) -- (a4) -- (a5) -- (a6) -- cycle;
    %
    
    \draw[->, thick] (-0.1, -0.1) -- (1.2, -0.1) node[anchor=west] {$u(1)$};
    \draw[->, thick] (-0.1, -0.1) -- (-0.1, 1.2) node[anchor=south] {$u(2)$};
  \node[blue!80!black] at (0.5, 0.5) {$\mathcal{Q}(p_i, b_i)$};

  \node[teal!60!black] at (0.94, 0.57) {$\mathcal{F}_i $};


    \end{tikzpicture}
    \caption{The individual flexibility set of a TCL $\mathcal{F}_i$ (blue shaded region) and its maximal inner approximation as a g-polymatroid $\mathcal{Q}(p_i, b_i)$ (green shaded region).}
    \label{fig:enter-label}
\end{figure}

\section{G-Polymatroid Approximations}\label{sec:approximation}
In this section, we derive maximal inner approximations of the individual flexibility sets using 
g-polymatroids. To this end, we first introduce a base polytope that satisfies the structural properties of a g-polymatroid. We then formulate an optimization problem to determine the optimal parameters of this base polytope, such that it yields the largest possible inner approximation of the true flexibility set. From this we use the results in \cite{Mukhi2025ExactPolymatroids} to derive the associated supermodular and submodular functions that define the g-polymatroid. 
Leveraging their efficient Minkowski sum computation, we construct inner approximations of the aggregate flexibility set.
\subsection{Base Polytope}
We start by introducing a base set that will be used to approximate $\mathcal{F}_i$. A common approach in the literature is to use a base set defined as a polytope representing a generalized battery model. In this work we also have the added constraint that the base set must also be a g-polymatroid, so that we can make use of Theorem, \ref{thm:g-poly_m_sum}.  In light of this we use a base set of the following form:
    \begin{equation*}
        \mathcal{B}_i(\underline{y}, \overline{y}) := \left\{ u \in \mathbb{R}^{\mathcal{T}} \middle\vert\;
        \begin{aligned}
        \underline{u}_i &\leq u(t) \leq \overline{u}_i,  \forall t \in \mathcal{T} \\
        \underline{y}(t) &\leq \sum_{s=1}^t u(s) \leq \overline{y}(t), \forall t \in \mathcal{T}
        \end{aligned}
        \right\}
\end{equation*}
parametrized by $\underline{y}, \overline{y} \in \mathbb{R}^\mathcal{T}$.
This set is sufficiently expressive to capture much of the flexibility of the individual sets, including power and time-dependent energy constraints, and aligns closely with the generalized battery models commonly in the literature. Moreover, as shown in \cite{Mukhi2025ExactPolymatroids}, polytopes of the form of $\mathcal{B}_i$ are g-polymatroids.
However, we note that alternative g-polymatroids may more effectively capture the dissipative dynamics of leaky storage devices, particularly over extended time horizons, finding such base sets is left as a direction for future work.
Given the base polytope $\mathcal{B}_i$, our objective is to compute a maximal inner approximation of the feasible set $\mathcal{F}_{i}$ using $\mathcal{B}_i(\underline{y}, \overline{y})$. This entails identifying the optimal time-dependent energy bounds $\underline{y}, \overline{y} \in \mathbb{R}^{\mathcal{T}}$ such that $\mathcal{B}_i$ is contained within $\mathcal{F}_i$. Unlike the homothet-based methods commonly employed in the literature, which rely on scaling and translating a fixed prototype set, our approach directly optimizes over the parameters $\underline{y}$ and $\overline{y}$. This parameterization provides additional degrees of freedom, enabling the derivation of tighter approximations of $\mathcal{F}_{i}$.

\begin{figure}
    \centering
    \begin{tikzpicture}[xscale=4, yscale=4]

    \pgfmathsetmacro{\umax}{1}
    \pgfmathsetmacro{\xmax}{1.1}
    \pgfmathsetmacro{\a}{0.5}
    \pgfmathsetmacro{\d}{0.2}
    \pgfmathsetmacro{\uone}{(\xmax - \umax)/\a}
    \pgfmathsetmacro{\utwo}{(\xmax - \a*\umax)}

    \pgfmathsetmacro{\xonex}{(\xmax -(\umax + \d))/\a}
    \pgfmathsetmacro{\xoney}{\umax + \d}
    \pgfmathsetmacro{\xtwox}{\umax + \d}
    \pgfmathsetmacro{\xtwoy}{\xmax - \a*(\umax + \d)}

    \pgfmathsetmacro{\yonex}{\uone - \d}
    \pgfmathsetmacro{\yoney}{\umax + \d}
    \pgfmathsetmacro{\ytwox}{\umax + \d}
    \pgfmathsetmacro{\ytwoy}{\uone - \d}

    \pgfmathsetmacro{\zonex}{\yonex + (\utwo - \uone)/2}
    \pgfmathsetmacro{\zoney}{\yoney + (\utwo - \uone)/2}
    \pgfmathsetmacro{\ztwox}{\ytwox + (\utwo - \uone)/2}
    \pgfmathsetmacro{\ztwoy}{\ytwoy + (\utwo - \uone)/2}

    \pgfmathsetmacro{\midx}{\umax / 3}
    \pgfmathsetmacro{\midy}{\umax / 3}

    \coordinate (b1) at (0,\umax);
    \coordinate (b2) at (\uone,\umax);
    \coordinate (b3) at (\umax,\utwo);
    \coordinate (b4) at (\umax,0);
    \coordinate (b5) at (0,0);

    \coordinate (x1) at (\xonex, \xoney);
    \coordinate (x2) at (\xtwox, \xtwoy);
    \coordinate (y1) at (\yonex, \yoney);
    \coordinate (y2) at (\ytwox, \ytwoy);
    \coordinate (z1) at (\zonex, \zoney);
    \coordinate (z2) at (\ztwox, \ztwoy);

    \coordinate (vu) at (\umax, \utwo);
    \coordinate (vl) at (\uone, \umax);

    \coordinate (mid) at (\midx, \midy);

  \draw[thick, green!80, opacity=0.3] (b1) -- (b2) -- (b3) -- (b4);

  \fill[green!80, opacity=0.3] (b1) -- (b2) -- (b3) -- (b4) -- (b5) -- cycle;

\draw[ultra thick, red!100!black] (b2) -- (b3);
\draw[blue!80!black, dashed, thick] (y1) -- (y2);
\node[below right, text=red!90!black, xshift=4.5em] at (b2) {$\overline{\mathcal{H}}$};
\node[above right, text=blue!70!black, xshift=-1.5em, yshift=2em] at (y2) {$\overline{y}^*(t) = \sum_{s=1}^tu(s)$};
\node[teal!60!black, right] at (mid) {$\mathcal{F}^{t}_i$};




\end{tikzpicture}
    \caption{$\overline{y}^*(t)$ is found by minimizing $\sum_s^tu(t)$ s.t. $u \in \overline{\mathcal{H}} \cap \mathcal{F}^{t}_{ i}$}
    \label{fig:visual_lemma}
\end{figure}
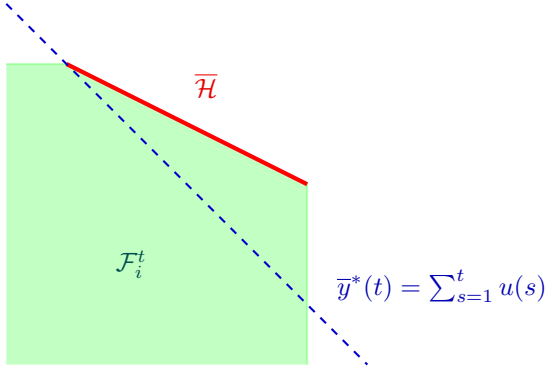

\subsection{Inner Approximation}
We now consider the construction of a maximal inner approximation of the feasible set $\mathcal{F}_{ i}$ using the parameterized base polytope $\mathcal{B}_i(\underline{y}, \overline{y})$. To proceed we will show how one can iteratively compute the values of $\underline{y}$ $\overline{y}$, in the following we will denote their optimal values as $\underline{y}^*$ and $\overline{y}^*$.
We begin by introducing the following notation: let $\mathcal{B}^t(\underline{y}, \overline{y})$ denote the restriction of the base polytope \(\mathcal{B}(\underline{y}, \overline{y})\) to its first \(t\) components, i.e.
\begin{equation}
    \mathcal{B}^t(\underline{y}, \overline{y}) := \left\{ u_{[1:t]} \in \mathbb{R}^t \,\middle|\, u \in \mathcal{B}(\underline{y}, \overline{y}) \right\},
\end{equation}
We will use similar notation for $\mathcal{F}_{ i}$. 
\begin{lem}\label{lemma:battery}
    For all $t \in \mathcal{T}$, the optimal parameters of $\mathcal{B}(\underline{y}, \overline{y})$ are the solutions to
    \begin{equation}\label{eq:opt_under}
    \begin{aligned} 
    \underline{y}^*(t) = \min_{\underline{y}(t)} \quad & \underline{y}(t) \\ 
    \text{s.t.} \quad & \underline{y}(t) \leq \sum_{s=1}^t u(s)\\
     &\quad \forall u \in \mathcal{F}^{t}_{ i} 
     \end{aligned}
     \end{equation}
\begin{equation} \label{eq:opt_over}
    \begin{aligned} 
    \overline{y}^*(t) = \max_{\overline{y}(t)} \quad & \overline{y}(t) \\ 
    \text{s.t.} \quad & \sum_{s=1}^t u(s) \leq  \overline{y}(t) \\
     &\quad \forall u \in \mathcal{F}^{t}_{ i} \end{aligned} \end{equation}

\end{lem}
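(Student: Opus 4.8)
The plan is to prove each program by collapsing its semi-infinite constraint to a single scalar bound, and then identifying the tight value of that bound with an extremum of the linear functional $u\mapsto\sum_{s=1}^t u(s)$ over the projection $\mathcal{F}^t_i$. First I would note that $\mathcal{B}_i(\underline{y},\overline{y})$ and $\mathcal{F}_i$ impose the same per-period bounds on each $u(t)$, so the only freedom the base polytope has lies in its cumulative-sum windows $[\underline{y}(t),\overline{y}(t)]$. Consequently, the relationship between $\mathcal{B}_i$ and $\mathcal{F}_i$ is governed, one time step at a time, entirely by how these windows enclose the cumulative sums that points of $\mathcal{F}_i$ actually attain.

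Next I would recast the requirement in \eqref{eq:opt_over}, namely $\sum_{s=1}^t u(s)\le\overline{y}(t)$ for all $u\in\mathcal{F}^t_i$, as the single scalar inequality $\overline{y}(t)\ge M_t$ with $M_t:=\max_{u\in\mathcal{F}^t_i}\sum_{s=1}^t u(s)$, and symmetrically the requirement in \eqref{eq:opt_under} as $\underline{y}(t)\le m_t$ with $m_t:=\min_{u\in\mathcal{F}^t_i}\sum_{s=1}^t u(s)$. These extrema exist and are finite: $\mathcal{F}_i$ is a compact polytope, being cut out by finitely many linear inequalities and bounded by the power limits, so its coordinate projection $\mathcal{F}^t_i$ is again a compact polytope, and the objective is linear, hence continuous. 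The feasible set of each program is therefore a closed half-line, $[M_t,\infty)$ respectively $(-\infty,m_t]$, whose tight endpoint is the candidate optimal value $\overline{y}^*(t)=M_t$, respectively $\underline{y}^*(t)=m_t$; that this endpoint is the value selected is confirmed in the optimality step below.

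I would then argue that solving these $2T$ scalar programs one coordinate at a time returns the genuinely optimal base polytope and not merely a feasible one, the notion of optimality being the tightest windows consistent with $\mathcal{F}_i$. The key is a \emph{decoupling} observation: the window constraint at time $t$ involves only the single parameter $\overline{y}(t)$, respectively $\underline{y}(t)$, and no parameter of any other time step, so the joint problem separates into $T$ independent scalar problems, each solved above. Containment at the optimum is then immediate, since every $u\in\mathcal{F}_i$ projects into $\mathcal{F}^t_i$ and hence satisfies $\underline{y}^*(t)\le\sum_{s=1}^t u(s)\le\overline{y}^*(t)$ together with the shared power bounds; extremality holds because shrinking $\overline{y}(t)$ below $M_t$ would exclude the point attaining that maximum, and symmetrically for $\underline{y}(t)$. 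This fixes each endpoint as the selected optimum.

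The main obstacle I expect is not the scalar optimisation but justifying that it is legitimate to work over the projection $\mathcal{F}^t_i$ and that the per-time-step problems truly decouple. The projection step is sound because $\sum_{s=1}^t u(s)$ depends only on the first $t$ coordinates, so its range over $\mathcal{F}_i$ coincides with its range over $\mathcal{F}^t_i$; the decoupling requires checking that no constraint attached to a later time step $t'>t$ can tighten the admissible value of $\overline{y}(t)$, which follows from the one-parameter-per-constraint structure of the windows. Finally, the orientation of each program, that is, that the tight endpoint rather than the unbounded direction is the relevant optimum, is exactly where compactness of $\mathcal{F}^t_i$ is used, guaranteeing that $M_t$ and $m_t$ are finite and attained.
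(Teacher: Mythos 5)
Your proof establishes a different statement from the one the paper needs here, and the quantity you compute is wrong. You identify $\overline{y}^*(t)=M_t:=\max_{u\in\mathcal{F}^t_i}\sum_{s=1}^t u(s)$ and $\underline{y}^*(t)=m_t:=\min_{u\in\mathcal{F}^t_i}\sum_{s=1}^t u(s)$. Those are the tightest window endpoints for which the cumulative-sum constraints are \emph{valid inequalities} for $\mathcal{F}^t_i$, i.e.\ the parameters of the smallest set of the form $\mathcal{B}_i(\underline{y},\overline{y})$ that \emph{contains} $\mathcal{F}_i$ — an outer approximation. But this lemma sits in the subsection ``Inner Approximation'': its role (see the theorem that follows it, and Lemma~\ref{lemma:reformulation}, which is explicitly a reformulation of \eqref{eq:opt_under}--\eqref{eq:opt_over}) is to produce the maximal $\mathcal{B}$ with $\mathcal{B}\subseteq\mathcal{F}_i$. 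Because the dynamics are dissipative ($a<1$), the cumulative sum $\sum_s u(s)$ and the discounted sum $\sum_s a^{t-s}u(s)$ are genuinely different functionals, and your windows do not imply the temperature constraints. Concretely, in the two-period instance of Fig.~\ref{fig:enter-label} ($a=0.7$, $u(t)\in[0,1]$, $\underline{x}(t)=0.3$, $\overline{x}(t)=1.3$) one has $M_2=1.6$, attained at $(1,0.6)$; with $\overline{y}(2)=1.6$ the point $(0.6,1)$ lies in your $\mathcal{B}$ (it is box-feasible, $u(1)=0.6\in[m_1,M_1]=[0.3,1]$, and $u(1)+u(2)=1.6\le M_2$), yet $0.7\cdot 0.6+1=1.42>1.3$, so it violates the stage-2 temperature constraint and $\mathcal{B}\not\subseteq\mathcal{F}_i$. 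The correct value, per Lemma~\ref{lemma:reformulation} and Fig.~\ref{fig:visual_lemma}, is $\overline{y}^*(2)=\min\{u(1)+u(2):0.7u(1)+u(2)=1.3,\ u\in\mathcal{F}^2_i\}=10/7\approx 1.43$: a \emph{minimum} over the critical hyperplane slice $\overline{\mathcal{H}}\cap\mathcal{F}^t_i$, not a maximum over all of $\mathcal{F}^t_i$. In general $\overline{y}^*(t)\le M_t$ with strict inequality, so your identification is not merely a different derivation — it is a different number, and it contradicts the paper's subsequent results.

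The root of the problem is that you resolved the (admittedly garbled) statement in the wrong direction. Read literally, \eqref{eq:opt_under} minimizes $\underline{y}(t)$ over the half-line $(-\infty,m_t]$ and \eqref{eq:opt_over} maximizes $\overline{y}(t)$ over $[M_t,\infty)$; both are unbounded, so the programs cannot ``select the tight endpoint'' as you assert — to get $m_t$ and $M_t$ you silently swapped $\min$ and $\max$, and that swap is precisely what commits you to the outer-approximation reading. The resolution consistent with the rest of the paper is to keep the stated $\min$/$\max$ and replace the written constraint by the containment requirement $\mathcal{B}^t(\underline{y},\overline{y})\subseteq\mathcal{F}^t_i$: then decreasing $\underline{y}(t)$ and increasing $\overline{y}(t)$ enlarge $\mathcal{B}$, and extremizing subject to containment yields the \emph{maximal inner} approximation. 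This is how the paper's own proof proceeds — an induction on $t$ (if $\mathcal{B}^{t-1}\subseteq\mathcal{F}^{t-1}_i$ and the stage-$t$ window is compatible, then $\mathcal{B}^{t}\subseteq\mathcal{F}^{t}_i$) combined with monotonicity of $\mathcal{B}(\underline{y},\overline{y})$ in its parameters — even though its displayed condition \eqref{eq:lmma_inc} suffers from the same directional typo as the lemma statement. Note also that under the containment reading your ``decoupling'' observation is no longer automatic (the admissible window at time $t$ depends on the earlier containments), which is exactly why the paper argues recursively and why the optimum ends up being the hyperplane program of Lemma~\ref{lemma:reformulation} rather than the support values $m_t,M_t$.
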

The proof can be found in the Appendix. We can reformulate \eqref{eq:opt_under} and \eqref{eq:opt_over} with the following Lemma.

\begin{lem}\label{lemma:reformulation}
The computation of the optimal lower bound \( \underline{y}^*(t) \), as defined in \eqref{eq:opt_under}, can be equivalently reformulated as the following optimization problem:
\begin{equation}\label{eq:reform_under}
\begin{aligned}
    \underline{y}^*(t) = \max_{u \in \mathbb{R}^t} \quad & \sum_{s=1}^t u(s) \\
    \text{s.t.} \quad & u \in \underline{\mathcal{H}}, \\
    & u \in \mathcal{F}^t_{ i},
\end{aligned}
\end{equation}
where the hyperplane \( \underline{\mathcal{H}} \subset \mathbb{R}^t \) is defined as
\[
    \underline{\mathcal{H}} := \left\{ u \in \mathbb{R}^t \;\middle|\; \sum_{s=1}^t a^{t-s} u(s) = \underline{z}_i(t) \right\},
\]
and \( \underline{z}_i(t) \) is given by
\[
    \underline{z}_i(t) := \max \left\{ \underline{x}_i(t), \sum_{s=1}^t a^{t-s} \underline{u}_i \right\}.
\]

Similarly, the optimal upper bound \( \overline{y}^*(t) \), corresponding to \eqref{eq:opt_over}, can be reformulated as:
\begin{equation}\label{eq:reform_over}
\begin{aligned}
    \overline{y}^*(t) = \min_{u \in \mathbb{R}^t} \quad & \sum_{s=1}^t u(s) \\
    \text{s.t.} \quad & u \in \overline{\mathcal{H}}, \\
    & u \in \mathcal{F}^t_{ i},
\end{aligned}
\end{equation}
where the hyperplane \( \overline{\mathcal{H}} \subset \mathbb{R}^t \) is defined as
\[
    \overline{\mathcal{H}} := \left\{ u \in \mathbb{R}^t \;\middle|\; \sum_{s=1}^t a^{t-s} u(s) = \overline{z}_i(t) \right\},
\]
with
\[
    \overline{z}_i(t) := \min \left\{ \overline{x}_i(t), \sum_{s=1}^t a^{t-s} \overline{u}_i \right\}.
\]
\end{lem}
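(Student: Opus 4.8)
The plan is to prove the reformulation for the upper bound \eqref{eq:reform_over}; the lower bound \eqref{eq:reform_under} then follows by an identical argument with every inequality reversed and $\min$/$\max$ interchanged. First I would restate what Lemma~\ref{lemma:battery} delivers: $\overline{y}^*(t)$ is the extremal value of the plain cumulative bound $\overline{y}(t)$ for which the truncated base polytope $\mathcal{B}^t$ still satisfies $\mathcal{B}^t \subseteq \mathcal{F}^t_i$. Since $\mathcal{B}^t$ and $\mathcal{F}^t_i$ carry the \emph{same} box constraints $\underline{u}_i \le u(s) \le \overline{u}_i$ and differ only in that $\mathcal{F}^t_i$ imposes the weighted (state) constraint $\sum_{s=1}^t a^{t-s}u(s) \le \overline{x}_i(t)$ whereas $\mathcal{B}^t$ imposes the plain-sum constraint, containment at time $t$ reduces to the requirement that no box-feasible $u$ with $\sum_{s=1}^t u(s) \le \overline{y}(t)$ violate the upper state constraint.

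Next I would pin down the effective upper bound on the weighted sum. Maximizing $\sum_{s=1}^t a^{t-s}u(s)$ over the box alone gives $\sum_{s=1}^t a^{t-s}\overline{u}_i$, attained at $u \equiv \overline{u}_i$ because every weight $a^{t-s}$ is positive; hence over $\mathcal{F}^t_i$ the weighted sum can never exceed $\overline{z}_i(t) = \min\{\overline{x}_i(t), \sum_{s=1}^t a^{t-s}\overline{u}_i\}$, the smaller of the state- and power-imposed limits. The crux is then a set identity: a value $\overline{y}(t)$ preserves $\mathcal{B}^t \subseteq \mathcal{F}^t_i$ if and only if every feasible $u$ whose weighted sum exceeds $\overline{z}_i(t)$ has plain sum strictly greater than $\overline{y}(t)$, i.e. if and only if $\overline{y}(t) \le \inf\{\sum_{s=1}^t u(s) : u \in \mathcal{F}^t_i,\ \sum_{s=1}^t a^{t-s}u(s) > \overline{z}_i(t)\}$. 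By continuity of the linear objective together with compactness, this infimum over an open half-space is attained on its boundary, which is exactly the hyperplane $\overline{\mathcal{H}}$; taking the largest admissible $\overline{y}(t)$ yields $\overline{y}^*(t) = \min\{\sum_{s=1}^t u(s) : u \in \overline{\mathcal{H}} \cap \mathcal{F}^t_i\}$, which is precisely \eqref{eq:reform_over}.

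The main obstacle is justifying the exchange from the robust, ``for all $u$'' condition implicit in \eqref{eq:opt_over} to a single minimization over $\overline{\mathcal{H}} \cap \mathcal{F}^t_i$, and in particular handling the degenerate regime in which the power constraint, rather than the state constraint, is binding. When $\sum_{s=1}^t a^{t-s}\overline{u}_i \le \overline{x}_i(t)$ the forbidden half-space is empty, $\overline{z}_i(t)$ collapses onto the power limit, and $\overline{\mathcal{H}} \cap \mathcal{F}^t_i$ shrinks to the single vertex $u \equiv \overline{u}_i$; I would verify that the formula then returns $\overline{y}^*(t) = t\,\overline{u}_i$, so the plain-sum constraint is rendered vacuous (being already implied by the box) exactly when it ought to be. A secondary point to check is that restricting attention to a single time step is legitimate: because the construction proceeds iteratively in $t$ and every point of $\overline{\mathcal{H}} \cap \mathcal{F}^t_i$ is itself feasible, the per-$t$ minimizer automatically respects the earlier cumulative constraints, so the decoupled computation genuinely recovers the bounds of the maximal inner approximation.
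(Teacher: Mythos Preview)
Your overall strategy mirrors the paper's: identify which upper constraint of $\mathcal{F}^t_i$ is binding, encode it as the hyperplane $\overline{\mathcal{H}}$ at level $\overline{z}_i(t)$, and collapse the robust bound from Lemma~\ref{lemma:battery} to a single linear program on $\overline{\mathcal{H}}\cap\mathcal{F}^t_i$. You also supply two things the paper's (very brief) proof leaves implicit, namely the explicit handling of the degenerate regime $\sum_{s} a^{t-s}\overline{u}_i \le \overline{x}_i(t)$ and a compactness/continuity argument for passing from a strict half-space to its bounding hyperplane.

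There is, however, one genuine slip in your formalization. In the displayed infimum you quantify over $u\in\mathcal{F}^t_i$ with $\sum_{s} a^{t-s}u(s) > \overline{z}_i(t)$; but $\overline{z}_i(t)$ is, by your own observation, the \emph{maximum} weighted sum attainable over the box, hence a fortiori over $\mathcal{F}^t_i$. That set is therefore empty, the infimum is $+\infty$, and the condition becomes vacuous. The contrapositive you actually need ranges over \emph{box-feasible} $u$ (and, if you proceed inductively, those satisfying the earlier plain-sum bounds of $\mathcal{B}^{t-1}$) that violate the $t$-th state constraint, not over $u\in\mathcal{F}^t_i$. Once this is fixed you still owe a short argument for why, on the boundary $\overline{\mathcal{H}}$, minimizing over box-feasible $u$ and minimizing over $u\in\mathcal{F}^t_i$ give the same value, so that the final formula with the constraint $u\in\mathcal{F}^t_i$ is legitimate; this amounts to checking that the earlier state constraints for $s<t$ are respected at the minimizer, which is exactly the ``secondary point'' you raise at the end but do not close.
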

The proof of Lemma~\ref{lemma:reformulation} is provided in the Appendix. Fig. \ref{fig:visual_lemma} provides intuition behind this Lemma, when considering the upper bound $\overline{y}^*(t)$.
Concluding this section, we consolidate the preceding results and present a construction of the maximal inner approximation with the following theorem.
\begin{thm}
    The maximal inner approximation of the set $\mathcal{F}_{ i}$ using a base polytope representation is given by $\mathcal{B}(\underline{y}^*, \overline{y}^*)$, where the optimal lower and upper bounds, $\underline{y}^*$ and $\overline{y}^*$, are computed using \eqref{eq:reform_under} and \eqref{eq:reform_over} respectively.
\end{thm}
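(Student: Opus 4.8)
The plan is to consolidate Lemmas~\ref{lemma:battery} and~\ref{lemma:reformulation} into a containment-and-maximality argument whose backbone is the monotonicity of the base polytope in its parameters. First I would record that $\mathcal{B}(\underline{y},\overline{y})$ is order-reversing in $\underline{y}$ and order-preserving in $\overline{y}$: if $\underline{y}_1 \le \underline{y}_2$ and $\overline{y}_1 \ge \overline{y}_2$ componentwise, then $\mathcal{B}(\underline{y}_2,\overline{y}_2) \subseteq \mathcal{B}(\underline{y}_1,\overline{y}_1)$, since loosening the cumulative-energy band $\underline{y}(t) \le \sum_{s=1}^t u(s) \le \overline{y}(t)$ can only enlarge the feasible set. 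Hence among all base polytopes contained in $\mathcal{F}_i$, a maximal one is obtained by pushing every $\underline{y}(t)$ to its smallest admissible value and every $\overline{y}(t)$ to its largest, where ``admissible'' means the containment $\mathcal{B}(\underline{y},\overline{y}) \subseteq \mathcal{F}_i$ is preserved.

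Next I would appeal to Lemma~\ref{lemma:battery}, which identifies exactly these extremal values: for each $t$, the smallest feasible $\underline{y}(t)$ and largest feasible $\overline{y}(t)$ are the optima $\underline{y}^*(t)$ and $\overline{y}^*(t)$ of \eqref{eq:opt_under} and \eqref{eq:opt_over}. The structural fact underlying that lemma is that the containment condition separates across the time index, so the bounds may be extremized one coordinate at a time and the resulting vector $(\underline{y}^*,\overline{y}^*)$ remains jointly admissible, i.e. $\mathcal{B}(\underline{y}^*,\overline{y}^*) \subseteq \mathcal{F}_i$. Combined with the monotonicity of the first step, this shows that $\mathcal{B}(\underline{y}^*,\overline{y}^*)$ contains every other base polytope contained in $\mathcal{F}_i$, and is therefore the unique maximal inner approximation of this form.

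Finally I would invoke Lemma~\ref{lemma:reformulation} to replace the defining programs \eqref{eq:opt_under} and \eqref{eq:opt_over} by their tractable equivalents \eqref{eq:reform_under} and \eqref{eq:reform_over}. Geometrically, each reformulation slides the cumulative-sum hyperplane of $\mathcal{B}$ until it is tangent to the facet of $\mathcal{F}^t_i$ on which the discounted energy $\sum_{s=1}^t a^{t-s}u(s)$ attains its extreme value $\underline{z}_i(t)$ or $\overline{z}_i(t)$, and reads off the extremal undiscounted sum along that facet; this is precisely the linear program in \eqref{eq:reform_under}/\eqref{eq:reform_over}. Substituting the values so computed into the base polytope then yields the asserted construction $\mathcal{B}(\underline{y}^*,\overline{y}^*)$.

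The step I expect to carry the real weight is the decomposition underlying the second paragraph: one must ensure that extremizing each $\underline{y}(t)$ and $\overline{y}(t)$ \emph{independently} does not let the assembled polytope spill outside $\mathcal{F}_i$, so that the admissible parameter region is a product of intervals with a single greatest element. This well-definedness is what elevates the claim from ``a maximal inner approximation'' to ``\emph{the} maximal inner approximation''; it is delivered by Lemma~\ref{lemma:battery}, and is the genuinely substantive ingredient that the present theorem merely assembles together with the tractable reformulation of Lemma~\ref{lemma:reformulation}.
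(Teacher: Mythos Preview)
Your proposal is correct and follows essentially the same approach as the paper: the theorem is obtained by combining Lemma~\ref{lemma:battery} with Lemma~\ref{lemma:reformulation}. You have simply unpacked in more detail the monotonicity and separability observations that underlie Lemma~\ref{lemma:battery}, whereas the paper treats the theorem as an immediate corollary of the two lemmas.
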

\begin{proof}
    This follows trivially from using Lemma \ref{lemma:reformulation} with Lemma \ref{lemma:battery}.
\end{proof}


\subsection{Aggregation via g-polymatroids}
We now show how the inner approximation  $\mathcal{B}(\underline{y}_i^*, \overline{y}_i^*)$ can be characterized as a g-polymatroid, and derive the super- and submodular functions that generate the g-polymatroids. Given this characterization, we conclude by giving an inner approximation of the aggregate flexibility for a collection of TCLs.

\begin{lem}\label{lemma:g-poly}
    $\mathcal{B}(\underline{y}^*, \overline{y}^*)$ is the g-polymatroid $\mathcal{Q}(p_i^T, b_i^T)$ where the super- and submodular functions can be computed using the following recursion for all $t \in \mathcal{T}$:
            \begin{equation*}
            \begin{aligned}
                p_i^t(A) =    \;& \underline{u}(A \setminus [t]) \\ 
                &+ \max \left\{ 
                    \begin{array}{c} 
                        p^{t-1}(A \cap [t]), \\ 
                        \underline{y}^*(t) - b^{t-1}(A') + \overline{u}(A' \setminus [t])
                    \end{array}
                    \right\} 
            \end{aligned}
        \end{equation*}
        \begin{equation*}
            \begin{aligned}
            b_i^t(A) =    \;& \overline{u}(A \setminus [t]) \\
                        & + \min \left\{ 
                                    \begin{array}{c} 
                                        b^{t-1}(A \cap [t]), \\ 
                                         \overline{y}^*(t) - p^{t-1}(A') + \underline{u}(A' \setminus [t])
                                    \end{array}
                            \right\}
            \end{aligned}
        \end{equation*}
\end{lem}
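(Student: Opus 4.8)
The plan is to recognize $p_i^t$ and $b_i^t$ as the exact support values of the polytope $\mathcal{B}^t(\underline{y}^*,\overline{y}^*)$ and to evaluate these by induction on $t$. Since polytopes of the form $\mathcal{B}_i$ are already known to be g-polymatroids \cite{Mukhi2025ExactPolymatroids}, and since for any g-polymatroid $\mathcal{Q}(p,b)$ the tight defining pair is recovered as $b(A)=\max_{u\in\mathcal{Q}(p,b)}u(A)$ and $p(A)=\min_{u\in\mathcal{Q}(p,b)}u(A)$ for every $A\subseteq\mathcal{T}$ \cite{Frank2011ConnectionsOptimization}, it is enough to show that these optimal values satisfy the stated recursion. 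I would therefore define $b_i^t(A):=\max\{u(A): u\in\mathbb{R}^\mathcal{T},\ u_{[1:t]}\in\mathcal{B}^t,\ \underline{u}\le u(s)\le\overline{u}\ \forall\,s>t\}$ and let $p_i^t(A)$ be the analogous minimum, so that $\mathcal{Q}(p_i^T,b_i^T)=\mathcal{B}(\underline{y}^*,\overline{y}^*)$ by construction; the remaining task is purely to verify the recursion.

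For the base case I take $t=0$, where no cumulative-sum constraint is present and $\mathcal{B}^0$ is the box $[\underline{u},\overline{u}]^\mathcal{T}$, whose support values are $b^0(A)=\overline{u}(A)$ and $p^0(A)=\underline{u}(A)$, initializing the recursion. For the inductive step the key observation is that passing from stage $t-1$ to stage $t$ adjoins exactly one new pair of constraints, $\underline{y}^*(t)\le u([t])\le\overline{y}^*(t)$; the per-coordinate box bounds on $u(t)$ are already present at stage $t-1$, where $t$ is treated as a tail coordinate. I then split $u(A)=u(A\cap[t])+u(A\setminus[t])$ and note that the coordinates in $A\setminus[t]$ enter only through box constraints, hence contribute $\overline{u}(A\setminus[t])$ to $b_i^t(A)$ (resp.\ $\underline{u}(A\setminus[t])$ to $p_i^t(A)$) and decouple. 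This reduces the computation to optimizing $u(A\cap[t])$ over the head coordinates subject to the stage-$(t-1)$ constraints together with the single new cumulative bound.

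The crux is this head subproblem, which I resolve by a binding/non-binding case analysis. When maximizing $u(A)$, only the upper cumulative bound $u([t])\le\overline{y}^*(t)$ can be active. If it is slack the optimum is unchanged and the inductive hypothesis gives $b^{t-1}(A\cap[t])$. If it is tight then $u([t])=\overline{y}^*(t)$, so $u(A\cap[t])=\overline{y}^*(t)-u([t]\setminus A)$ and the problem reduces to minimizing $u([t]\setminus A)$ over the stage-$(t-1)$ polytope. Writing $A'=\mathcal{T}\setminus A$, this head minimum equals $p^{t-1}(A')-\underline{u}(A'\setminus[t])$ by the inductive hypothesis, the subtracted term removing the box contribution of the tail coordinates that $p^{t-1}(A')$ carries; this produces the branch $\overline{y}^*(t)-p^{t-1}(A')+\underline{u}(A'\setminus[t])$. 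Since adjoining a constraint can only decrease a maximum, $b_i^t(A)$ is at most each branch, and exhibiting a feasible point attaining the smaller one gives the minimum in the recursion. The argument for $p_i^t(A)$ is the mirror image: minimizing $u(A)$ can only activate the lower bound $u([t])\ge\underline{y}^*(t)$, the complementary term becomes $\underline{y}^*(t)-b^{t-1}(A')+\overline{u}(A'\setminus[t])$, and since adjoining a constraint can only increase a minimum one takes the maximum of the two branches.

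I expect the main obstacle to lie in the achievability half of the head subproblem: having bounded the optimum by the relevant branch, one must construct a feasible $u\in\mathcal{B}^t$ attaining it, which requires verifying that the non-binding cumulative bound (the lower one when maximizing, the upper one when minimizing) can always be met by adjusting coordinates outside $A$ without disturbing the objective, and that the inductive hypothesis legitimately applies to the complementary-set optimization even when $t\in A'$. This is essentially an exchange-type argument on the g-polymatroid. The super/submodularity and paramodularity of the resulting pair $(p_i^t,b_i^t)$ need not be checked separately, as they follow automatically once $\mathcal{B}^t$ is known to be a g-polymatroid and $(p_i^t,b_i^t)$ are identified as its support values.
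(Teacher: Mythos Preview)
Your approach is correct in outline but takes a genuinely different route from the paper. The paper's proof is a one-line citation: it observes that Lemma~1 and Corollary~1 of \cite{Mukhi2025ExactPolymatroids} already establish that any polytope of the form $\mathcal{B}(\underline{y},\overline{y})$ is a g-polymatroid and give exactly the recursive formulas for its defining pair, so the lemma follows by plugging in $\underline{y}^*,\overline{y}^*$. You instead reconstruct that cited result from scratch, identifying $(p_i^t,b_i^t)$ as the support values of the staged polytopes and deriving the recursion by induction on $t$ via a binding/non-binding case analysis on the newly adjoined cumulative constraint. What you gain is a self-contained argument that does not rely on the reader having \cite{Mukhi2025ExactPolymatroids} to hand, and some structural insight into why the recursion has the shape it does; what the paper gains is brevity and a clean separation of concerns, since the recursion is a property of the base polytope family and not of the particular parameters $\underline{y}^*,\overline{y}^*$. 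Your identified obstacle (achievability in the head subproblem, and applicability of the inductive hypothesis to the complementary set when $t\in A'$) is real but tractable, and is precisely what the cited corollary handles; if you pursue your route you will effectively be reproving that corollary.
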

\begin{proof}
    Lemma~1 of \cite{Mukhi2025ExactPolymatroids} provides a characterization of polytopes of the form  \( \mathcal{B}(\underline{y}^*, \overline{y}^*) \) as a g-polymatroids, and Corollary~1 of \cite{Mukhi2025ExactPolymatroids} gives the corresponding super- and submodular functions. 
    The result follows trivially by substituting the parameters of the optimal base polytope \( \mathcal{B}(\underline{y}^*, \overline{y}^*) \) into these results.
\end{proof}
Finally, leveraging the characterization of the inner approximations as g-polymatroids, together with Theorem~\ref{thm:g-poly_m_sum}, we obtain, $\mathcal{Q}(p^*_\mathcal{N}, b^*_\mathcal{N})$, a representation of an inner approximation to the aggregate flexibility set of a collection of TCLs:
\begin{equation*}
        \mathcal{Q}(p^*_\mathcal{N}, b^*_\mathcal{N}) \subseteq \mathcal{F}_\mathcal{N}
\end{equation*}
where the super- and submodular functions of $\mathcal{Q}(p_\mathcal{N}, b_\mathcal{N})$, are given by
\begin{equation*}
        p_\mathcal{N} := \sum_i^N p_i^T,  \quad  b_\mathcal{N} := \sum_i^N b_i^T,
\end{equation*}
and $p_i^T$ and $b_i^T$ are given by Lemma \ref{lemma:g-poly}.

\section{Numerical Results}\label{sec:num_res}
This section presents numerical results that benchmark the performance of the proposed aggregation methods against existing approaches in the literature. We begin by quantifying the approximation error associated with the proposed techniques. We conclude with a case study to highlight the real-world applicability of the proposed approach. For brevity, only these experiments are presented; however, the proposed approximation methods are also favorable in terms of computational complexity.
To benchmark, we compare our results with the homothet-based approximations introduced in \cite{Zhao2017ALoads} and the zonotope-based approaches developed in \cite{Muller2019AggregationResources}.

\subsection{Approximation Error}
\begin{figure}
    \centering
    \includegraphics[width=0.98\linewidth]{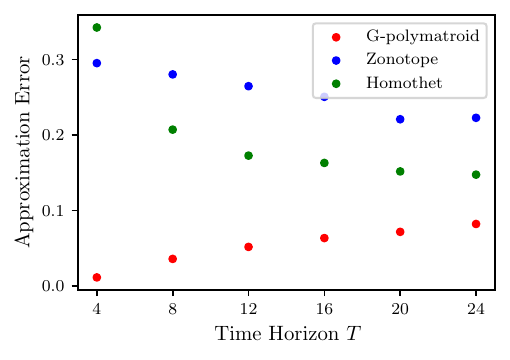}
    \caption{Approximation quality of the different methods as a function of the time horizon length. The proposed method consistently achieves lower approximation error compared to benchmark approaches.}
    \label{fig:approx_error}
\end{figure}

In this part, we compare the quality of the approximation methods. To this end, we sample a population of 100 TCLs and construct the corresponding approximations. A cost vector \( c \in \mathbb{R}^{\mathcal{T}} \) is generated, where each element is uniformly distribution over $[0,1]$. For each approximation, we formulate and solve a linear program (LP) to minimize the cost over the respective feasible set. The approximation error is then computed as
\begin{equation}
    \text{Approximation Error} = \frac{J^*_{\text{approx}} - J^*_{\text{exact}}}{J^*_{\text{exact}}},
\end{equation}
where \( J^*_{\text{approx}} \) denotes the optimal cost obtained using the approximation. \( J^*_{\text{exact}} \) denotes the true optimal cost, computed by individually optimizing each device and summing their respective costs.
As solving a linear programs selects vertices of the feasible set, the difference between the approximate and exact optimal costs can be used as a proxy for the approximation quality. For a perfect approximation, the approximation error vanishes, since the approximate and exact feasible sets would be identical. We repeat this process for multiple realizations of the cost vector and TCL populations, across varying time horizon lengths. The results are shown in Fig. \ref{fig:approx_error}, where we see the approximation methods proposed here, out perform those in the literature. Note however, the quality of the approximation decreases with the length of the time horizon.

\subsection{Tracking generation signals}
\begin{figure}
    \centering
    \includegraphics[width=0.98\linewidth]{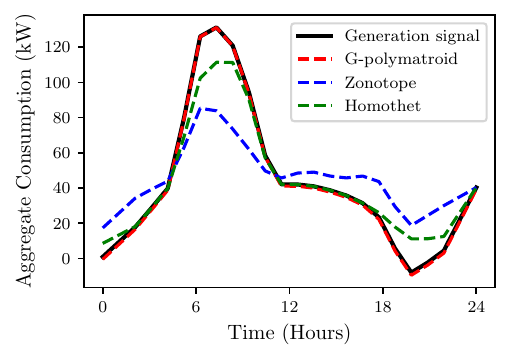}
    \caption{Aggregate charging profile during generation signal tracking. The proposed method follows the signal more accurately than the two benchmark approaches.}
    \label{fig:tracking_gen_sig}
\end{figure}
To demonstrate the practical utility of the proposed approach, we consider the task of tracking a reference generation signal. Specifically, we simulate a population of 100 TCLs, and synthesized a generation signal, $g \in \mathbb{R}^\mathcal{T}$. The objective is to ensure that the aggregated load profile tracks this signal. To achieve this, we formulate an optimization problem in which the objective is to minimize the $\ell^2$-norm between the aggregate consumption profile and $g$.
Fig. \ref{fig:tracking_gen_sig}  illustrates the optimal aggregate consumption profiles obtained using each of the considered aggregation methods. The approximation method presented in this paper forms a tighter approximation than the other methods in the literature, and thus is able to track the generation signal more closely.

\section{Conclusions and Future Work}\label{sec:conc}
In this paper, we introduced a novel method for approximating the flexibility sets of individual TCLs using g-polymatroids. Specifically, we employed a base polytope that is a g-polymatroid, and formulated an optimization problem to compute its maximal inner approximation within the true flexibility set of a TCL. From this, and using the properties of g-polymatroids we were able to provide representations of the aggregate flexibility in large populations of TCLs. Numerical results showed this approach offers tight approximations compared with other methods in the literature.

This work is particularly significant in the context of prior work that uses g-polymatroids to exactly characterize the flexibility sets of a less general class of devices, i.e. those with lossless storage dynamics. In conjunction with these earlier results, the proposed approach contributes to a broader framework for accurately and scalably characterizing the flexibility of a wide range of energy-constrained devices.

Future work may explore the use of alternative g-polymatroids as the base polytopes for constructing inner approximations. There exists a broad class of polytopes that exhibit g-polymatroid structure, some of which may be even better suited to capturing the dissipative dynamics of TCLs.

\bibliographystyle{IEEEtran}
\bibliography{references}
\appendix
\section{Proofs}
\subsection{Lemma \ref{lemma:battery}}
\begin{proof}
    For all $t \in \mathcal{T}$, if $\mathcal{B}^{t-1}(\underline{y}, \overline{y}) \subseteq \mathcal{F}^{t-1}_{ i}$ and 
    \begin{equation}\label{eq:lmma_inc}
        \underline{y}(t) \leq \sum_{s=1}^t u(s) \leq \overline{y}(t) \quad \forall \; u \in \mathcal{F}^{t}_{ i},
    \end{equation}
    then $\mathcal{B}^{t}(\underline{y}, \overline{y}) \subseteq \mathcal{F}^{t}_{ i}$, ensuring  $\mathcal{B}^{t}(\underline{y}, \overline{y})$ is an inner approximation. Furthermore, if $\underline{y}_b(t) \leq \underline{y}_a(t)$ and $\overline{y}_a(t) \leq \overline{y}_b(t)  \;\;\forall t \in \mathcal{T}$  then $\mathcal{B}(\underline{y}_a, \overline{y}_a) \subseteq \mathcal{B}(\underline{y}_b, \overline{y}_b)$. Therefore finding the maximum for $\overline{y}(t)$ and minimum for $\underline{y}(t)$, provides us with the maximal inner approximation. 
\end{proof}
\subsection{Lemma \ref{lemma:reformulation}}
\begin{proof}
    When maximizing $u \in \mathcal{F}_{ i}$ either one of the following two constraints will be active at the optimum:
    \begin{align*}
         \sum_{s=1}^t a^{t-s}u(s) &\leq \overline{x}_i(t) \\
         u(t) &\leq \overline{u}_i  \quad \forall \; t\in \mathcal{T}.
    \end{align*}
    Combing these two, we define a plane,  $\mathcal{H}$, where either of these constraints is tight:
    \begin{equation*}
        \overline{\mathcal{H}}:= \left\{ u\in \mathbb{R}^t \; \mid \sum_{s=1}^t a^{t-s}u(s) = \overline{z}_i(t)\right\},
    \end{equation*}
    where $\overline{z}_i(t) = \textrm{min} \left\{\overline{x}_i(t), \sum_{s=1}^t a^{t-s}\overline{u}_i \right\}$. 
    Now consider the problem:
    \begin{align*}
    u^* = \underset{u}{\textrm{argmin}} \quad & \sum_{s=1}^t u(s) \\
    \text{s.t.} \quad  & u\in \overline{\mathcal{H}}\\
    & u \in \mathcal{F}_{ i}. \\
\end{align*}
    By construction we have $\sum_{s=1}^t u^*(s) \leq  \sum_{s=1}^t u(s) \quad \forall u \in \mathcal{F}_{ i}$. Setting $\overline{y}^*(t) =  \sum_{s=1}^t u(s)$ we get the desired result for the upper bound on $\overline{y}(t)$.

    Similarly when to obtain the lower bound  $\underline{y}(t)$ we minimize  $u \in \mathcal{F}_{ i}$, where the active constraints are:
        \begin{align*}
         \overline{x}_i(t)  \leq \sum_{s=1}^t a^{t-s}u(s) \\
          \underline{u}_i \leq u(s) \quad \forall \; t\in \mathcal{T}.
    \end{align*}
    The plane defining the active constraints is then
        \begin{equation*}
        \underline{\mathcal{H}}:= \left\{ u\in \mathbb{R}^t \; \mid \sum_{s=1}^t a^{t-s}u(s) = \underline{z}_i(t)\right\},
    \end{equation*}
    where $\underline{z}_i(t) = \textrm{max} \left\{\underline{x}_i(t), \sum_{s=1}^t a^{t-s}\underline{u}_i \right\}$. As before for the optimal solution of the problem
    \begin{align*}
    u^* = \underset{u}{\textrm{argmax}} \quad & \sum_{s=1}^t u(s) \\
    \text{s.t.} \quad  & u\in \underline{\mathcal{H}}\\
    & u \in \mathcal{F}_{ i}. \\
\end{align*}
    we have $ \sum_{s=1}^t u(s) \leq \sum_{s=1}^t u^*(s)  \quad \forall u \in \mathcal{F}_{ i}$. Setting $\underline{y}(t) =  \sum_{s=1}^t u^*(s)$ we conclude our proof.
\end{proof}

\end{document}